\newtheorem{theorem}{Theorem}
\newtheorem{claim}{Claim}
\theoremstyle{definition}
\crefname{invar}{invariant}{invariants}
\crefname{ineq}{inequality}{inequalities}
\crefname{constr}{constraint}{constraints}
\crefname{tbl}{table}{tables}
\crefname{lem}{lemma}{lemmata}
\crefname{lemma}{lemma}{lemmata}
\crefname{cond}{condition}{conditions}
\title{On the Partition Set Cover Problem}
\author{Tanmay Inamdar \\ \texttt{tanmay-inamdar@uiowa.edu} \and Kasturi Varadarajan \\\texttt{kasturi-varadarajan@uiowa.edu}}
\date{Department of Computer Science\\The University of Iowa}
\newcommand{\real}{\mathbb{R}}
\newcommand{\X}{X}
\newcommand{\R}{\mathcal{R}}
\newcommand{\B}{\mathcal{B}}
\renewcommand{\C}{\mathcal{C}}
\newcommand{\C}{\mathcal{C}}
\renewcommand{\L}{\mathcal{L}}
\newcommand{\M}{\mathcal{M}}
\renewcommand{\deg}{\textsf{deg}}
\newcommand{\A}{\mathcal{A}}
\newcommand{\E}{\operatorname{E}}
\newcommand{\Var}{\mathrm{Var}}
\renewcommand{\tilde}{\widetilde}
\renewcommand{\S}{\mathcal{R}}
\newcommand{\K}{\mathcal{K}}
\newcommand{\PSC}{\textsf{PSC}\xspace}
\newcommand{\lr}[1]{\left(#1\right)}
\renewcommand{\epsilon}{\varepsilon}
\newcommand{\LP}{\textsf{LP}\xspace}
\newcommand{\hi}{\ensuremath{\langle S_i, \Sigma'_\ell \rangle}\xspace}
\newcommand{\NP}{\textsf{NP}}
\begin{document}
	\maketitle
	\begin{abstract}
		Several algorithms with an approximation guarantee of $O(\log n)$ are known for the Set Cover problem, where $n$ is the number of elements. We study a generalization of the Set Cover problem, called the Partition Set Cover problem. Here, the elements are partitioned into $r$ \emph{color classes}, and we are required to cover at least $k_t$ elements from each color class $\C_t$, using the minimum number of sets. We give a randomized \LP-rounding algorithm that is an $O(\beta + \log r)$ approximation for the Partition Set Cover problem. Here $\beta$ denotes the approximation guarantee for a related Set Cover instance obtained by rounding the standard \LP. As a corollary, we obtain improved approximation guarantees for various set systems for which $\beta$ is known to be sublogarithmic in $n$. We also extend the \LP rounding algorithm to obtain $O(\log r)$ approximations for similar generalizations of the Facility Location type problems. Finally, we show that many of these results are essentially tight, by showing that it is \NP-hard to obtain an $o(\log r)$-approximation for any of these problems.
	\end{abstract}
	
	\section{Introduction}
	We first consider the Set Cover problem. The input of the Set Cover problem consists of a set system $(X, \R)$, where $X$ is a set of $n$ elements and $\R$ is a collection of subsets of $X$. Each set $S_i \in \R$ has a non-negative weight $w_i$. The goal of the set cover problem is to find a minimum-weight sub-collection of sets from $\R$, that \emph{covers} $X$. In the unweighted version, all weights are assumed to be $1$. We state the standard \LP relaxation for the Set Cover problem.
	\begin{mdframed}[backgroundcolor=gray!9] 
		(Set Cover \LP)
		\begin{alignat}{3}
		\text{minimize}   \displaystyle&\sum\limits_{S_i \in \S} w_{i}x_{i} & \nonumber \\
		\text{subject to} \displaystyle&\sum\limits_{i:e_{j} \in S_{i}}   x_{i} \geq 1,  \quad &  \forall e_j \in \X \label[constr]{constr:sc-cover-ej}\\
		\displaystyle &x_i \in [0, 1], &  \forall S_i \in \S \label[constr]{constr:sc-fractional-x}
		\end{alignat}
	\end{mdframed}

	It is well-known that a simple greedy algorithm, or an \LP rounding algorithm gives an $O(\log n)$ approximation. This can be improved to $O(\log \Delta)$, where $\Delta$ is the maximum size of a set -- see \cite{VaziraniBook} for further details and references. It is also known that it is not possible to obtain an approximation guarantee that is asymptotically smaller than $O(\log n)$ in general, under certain standard complexity theoretic assumptions (\cite{Feige1998,DS2014}). A simple \LP rounding algorithm is also known to give an $f$ approximation (see \cite{VaziraniBook}), where $f$ is the maximum \emph{frequency} of an element, i.e., the maximum number of sets any element is contained in. For several set systems such as geometric set systems, however, sublogarithmic\,---\,or even constant\,---\,approximation guarantees are known. For a detailed discussion of such results, see \cite{Inamdar2018partial}. 
	
	Partial Set Cover problem (\PSC) is a generalization of the Set Cover problem. Here, along with the set system $(X, \R)$, we are also given the coverage requirement $1 \le k \le n$. The objective of \PSC is to find a minimum-weight cover for at least $k$ of the given $n$ elements. It is easy to see that when the coverage requirement $k$ equals $n$, \PSC reduces to the Set Cover problem. However, for a general $k$, \PSC introduces the additional difficulty of discovering a subset with $k$ of the $n$ elements that one must aim to cover in order to obtain a least expensive solution. Despite this, approximation guarantees matching that for the standard Set Cover are known in many cases. For example, a slight modification of the greedy algorithm can be shown to be an $O(\log \Delta)$ approximation (\cite{Kearns1990,Slavik1997}). Algorithms achieving the approximation guarantee of $f$ are known via various techniques -- see \cite{Fujito04} and the references therein. For several instances of \PSC, an $O(\beta)$ approximation algorithm was described in \cite{Inamdar2018partial}, where $\beta$ is the integrality gap of the standard set cover \LP for a related set system. As a corollary, they give improved approximation guarantees for the geometric instances of \PSC, for which $\beta$ is known to be sublogarithmic (or even a constant).
	
	\paragraph{Partition Set Cover Problem.}
	Now we consider a further generalization of the Partial Set Cover problem, called the Partition Set Cover problem. Again, the input contains a set system $(X, \R)$, with weights on the sets, where $X = \{e_1, \ldots, e_n\}$ and $\R = \{S_1, \ldots, S_m\}$. We are also given $r$ non-empty subsets of $X$: $\C_1, \ldots, \C_r$, where each $\C_t$ is referred to as a \emph{color class}. These $r$ color classes form a partition of $X$. Each color class $\C_t$ has a coverage requirement $1 \le k_t \le |\C_t|$ that is also a part of the input. The objective of the Partition Set Cover problem is to find a minimum-weight sub-collection $\R' \subseteq \R$, such that it meets the coverage requirement of each color class, i.e., for each color class $\C_t$, we have that $|(\bigcup \R') \cap \C_t| \ge k_t$. Here, for any $\R' \subseteq \R$, we use the shorthand $\bigcup \R'$ for referring to the union of all sets in $\R'$, i.e.,  $\bigcup \R' \coloneqq \bigcup_{S_i \in \R'} S_i$.
	
	\citet{bera2014approximation} give an $O(\log r)$-approximation for an analogous version of the Vertex Cover problem, called the Partition Vertex Cover problem. This is a special case of the Partition Set Cover problem where each element is contained in exactly two sets. The Vertex Cover and Partial Vertex Cover problems are, respectively, special cases of the Set Cover and Partial Set Cover problems. 
        Various $2$ approximations are known for Vertex Cover (see references in \cite{VaziraniBook}) as well as Partial Vertex Cover (\cite{BshoutyL1998,Hochbaum98,Gandhi2004}).
        \citet{bera2014approximation} note that for the Partition Set Cover problem, an extension of the greedy algorithm of \citet{Slavik1997} gives an $O(\log (\sum_{t = 1}^r k_t))$ approximation. On the negative side, they (\cite{bera2014approximation}) show that it is \NP-hard to obtain an approximation guarantee asymptotically better than $O(\log r)$ for the Partition Vertex Cover problem. Since Partition Vertex Cover is a special case of Partition Set Cover, the same hardness result holds for the Partition Set Cover problem as well. 
	
	\citet{Har2018few} consider a problem concerned with breaking up a collection of point sets using a minimum number of hyperplanes. They reduce this problem to an instance of the Partition Set Cover problem, where the color classes are no longer required to form a partition of $X$. For this problem, which they call `Partial Cover for Multiple Sets', they describe an $O(\log (nr))$ approximation. We note that the algorithm of \citet{bera2014approximation} as well as our algorithm easily extends to this more general setting.

	\subsection{Natural \LP Relaxation and Its Integrality Gap} \label{subsec:nat-lp}
	
	Given the success of algorithms based on the natural \LP relaxation for \PSC, let us first consider the natural \LP for the Partition Set Cover problem. We first state this natural \LP relaxation.
	
	\begin{mdframed}[backgroundcolor=gray!9] 
		(Natural \LP)
		\begin{alignat}{3}
		\text{minimize}   \displaystyle&\sum\limits_{S_i \in \S} w_{i}x_{i} & \nonumber \\
		\text{subject to} \displaystyle&\sum\limits_{i:e_{j} \in S_{i}}   x_{i} \geq z_j,  \quad &  \forall e_j \in \X \label[constr]{constr:cover-ej}\\
		\displaystyle&\sum_{e_j \in \C_r}z_j \ge k_t, & \forall \C_t \in \{\C_1, \ldots, \C_r\} \label[constr]{constr:cover-ci}\\
		\displaystyle &z_j \in [0, 1], &  \forall e_j \in \X \label[constr]{constr:fractional-z}\\
		\displaystyle &x_i \in [0, 1], &  \forall S_i \in \S \label[constr]{constr:fractional-x}
		\end{alignat}
	\end{mdframed}
	
	In the corresponding integer program, the variable $x_i$ denotes whether or not the set $S_i \in \R$ is included in the solution. Similarly, the variable $z_j$ denotes whether or not an element $e_j \in X$ is covered in the solution. Both types of variables are restricted to $\{0, 1\}$ in the integer program. The \LP relaxation stated above relaxes this condition by allowing those variables to take any value from $[0, 1]$.

	One important way the Standard \LP differs from the Partial Cover \LP is that we have a coverage constraint (\Cref{constr:cover-ci}) for each color class $\C_1, \ldots, \C_r$. Unfortunately, this \LP has a large integrality gap as demonstrated by the following simple construction.
	
	\paragraph{Integrality Gap.}
	Let $(X, \R)$ be the given set system, where $X = \{e_1, e_2, \ldots, e_n\}$ and $\R = \{S_1, S_2, \ldots, S_{\sqrt{n}}\}$ -- assuming $n$ is a perfect square. The sets $S_i$ form a partition of $X$, such that each $S_i$ contains exactly $\sqrt{n}$ elements. For any $1 \le i \le \sqrt{n}$, the color class $\C_i$ equals $S_i$, and its coverage requirement, $k_i = 1$. Also, for each set $S_i$, $w_i = 1$. Clearly any integral optimal solution must choose all sets $S_1, \ldots, S_{\sqrt{n}}$, with cost $\sqrt{n}$.
	
	On the other hand, consider a fractional solution $(x, z)$, where for any $S_i \in \R$, $x_i = \frac{1}{\sqrt{n}}$; and for any $e_j \in X$, $z_j = \frac{1}{\sqrt{n}}$. It is easy to see that this solution satisfies all constraints, and has cost $1$. We emphasize here that even the natural \PSC \LP has a large integrality gap, but it can be easily circumvented via parametric search; see \cite{Gandhi2004,Inamdar2018partial} for examples. For the Partition Set Cover problem, however, similar techniques do not seem to work.
	
	\subsection{Our Results and Techniques}
	For any subset $Y \subseteq X$, let $(Y, \R_{|Y})$ denote the projection of $(X, \R)$ on $Y$, where $\R_{|Y} = \{S_i \cap Y \mid S_i \in \R \}$. Suppose there exists an algorithm that can round a feasible \emph{Set Cover} \LP solution for any projection $(Y, \R_{|Y})$, within a factor of $\beta$. Then, we show that there exists an $O(\beta + \log r)$ approximation for the Partition Set Cover problem on the original set system $(X, \R)$. 
	
	Given the integrality gap of the natural \LP for the Partition Set Cover problem, we strengthen it by adding the knapsack cover inequalities (first introduced by \citet{carrStrengthening}) to the \LP relaxation -- the details are given in the following section. This approach is similar to that used for the Partition Vertex Cover problem considered in \citet{bera2014approximation}, and for the Partial Set Cover problem in \citet{Fujito04}. A similar technique was also used for a scheduling with outliers problem by \citet{gupta2009scheduling}.
	
	Once we have a solution to this strengthened \LP, we partition the elements of $X$ as follows. The elements that are covered to an extent of at least a positive constant in this \LP solution are said to be \emph{heavy} (the precise definition is given in the following section), and the rest of the elements are \emph{light}. For the heavy elements, we obtain a standard \emph{Set Cover} \LP solution, and round it using a black-box rounding algorithm with a guarantee $\beta$. 
	
	To meet the residual coverage requirements of the color classes, we use a randomized algorithm that covers some of the light elements. This randomized rounding consists of $O(\log r)$ independent iterations of a simple randomized rounding process. Let $\Sigma_\ell$ be a random collection of sets, which is the outcome of an iteration $\ell$. The cover for the heavy elements from previous step, plus $\bigcup_\ell \Sigma_\ell$, which is the result of randomized rounding, make up the final output of our algorithm. The analysis of this solution hinges on showing the following two properties of $\Sigma_\ell$. 
	\begin{enumerate}
		\item Expected cost of $\Sigma_\ell$ is no more than some constant times the optimal cost, and
		\item For any color class $\C_t$, the sets in $\Sigma_\ell$ satisfy the residual coverage of $\C_t$ with at least a constant probability.
	\end{enumerate}
	The first property of $\Sigma_\ell$ follows easily, given the description of the randomized rounding process. Much of the analysis is devoted to showing the second property. The randomized rounding algorithm ensures that in each iteration, the residual requirement of each color class is satisfied in expectation. However, showing the second property, i.e., the residual coverage is satisfied with at least a constant probability, is rather tricky for the Partition Set Cover problem. The analogous claim about the Partition Vertex Cover problem in \cite{bera2014approximation} is easier to prove, because each edge is incident on exactly two vertices. Despite the fact that here, an element can belong to any number of sets, we are able to show that the second property holds, using a careful analysis of the randomized rounding process.
	
	From these two properties, it is straightforward to show that the algorithm is an $O(\beta + \log r)$ approximation for the Partition Set Cover problem (for details, see \Cref{thm:main-theorem}). As shown in \cite{bera2014approximation}, $\Omega(\log r)$ is necessary, even for the Partition Vertex Cover problem, and we extend this result to the Partition Set Cover problem induced by various geometric set systems (see \Cref{sec:hardness}). Also $\beta$ is the integrality gap of the standard Set Cover \LP for this set system. Our approximation guarantee of $O(\beta + \log r)$ should be viewed in the context of these facts. We also note here that an extension of the \LP rounding algorithm of \cite{Inamdar2018partial} for the \PSC can be shown to be an $O(\beta + r)$ approximation. In particular, this extension involves doing a separate rounding for the light elements in each color class, which does not take advantage of the fact that a set in $\R$ may cover elements from multiple color classes. When $\beta$ is a constant, our guarantee is an exponential improvement over $O(\beta + r)$.

        Our analysis of the randomized rounding process may be of independent interest, and at its core establishes the following type of claim. Suppose we have a set system $(X, \R)$, where each set $S_i \in \R$ has at most $k$ elements; here $k$ is a parameter that can be much smaller than $|X|$. Suppose that we construct a collection of subsets $\Sigma \subseteq \R$ by independently picking each set $S_i \in \R$ with a certain probability $0 < \mu_i < 1$. Assume that the set system and the $\mu_i$ ensure the following properties:
        \begin{enumerate}
        \item The expected number of elements that $\Sigma$ covers is large in terms of $k$, say at least $6k$.
        \item For any element in $X$, the probability that it is covered by $\Sigma$ is at most a constant strictly smaller than $1$.
        \end{enumerate}
        Given these conditions, our analysis shows that $\Sigma$ covers at least $k$ elements with probability at least a positive constant.
	
	\paragraph{Applications.}
	
	As described earlier, for several geometric set systems, the approximation guarantee $\beta$ for Set Cover based on the natural LP relaxation is sublogarithmic. For example, when $X$ is a set of points in $\real^2$ (resp. $\real^3$), and each set in $\R$ consists of the points in $X$ contained in some input disk (resp. halfspace), an \LP rounding algorithm with a guarantee $\beta = O(1)$ is known. Therefore, for the corresponding Partition Cover instances, we get an $O(\log r)$ approximation. If $X$ is a set of $n$ rectangles in $\real^3$, and each set in $\R$ is the subset of rectangles that are stabbed by an input point, then an \LP rounding algorithm with a guarantee $\beta = O(\log \log n)$ is known. In the corresponding Partition Set Cover instance, we get an $O(\log \log n + \log r)$ approximation. We summarize some of these results in the following table. We remind the reader that it is \NP-hard to obtain an $o(\log r)$ approximation for all of these set systems, as shown in \Cref{sec:hardness}. Therefore, when $\beta = O(\log r)$, improving on the $O(\beta + \log r)$ approximation is \NP-hard. Otherwise, when $\beta = \omega(\log r)$, improving $O(\beta + \log r)$ also involves improving the approximation guarantee for the corresponding Set Cover problem, for example, hitting rectangles in $\real^3$, or covering by fat triangles in $\real^2$.
	
	
	\begin{table}[H]
		\centering
		\caption{Some of the approximation guarantees for Partition Set Cover (last column). In the third column, we have $\beta$, the \LP-based approximation guarantee for the corresponding Set Cover instance. See \cite{ClarksonV2007,AronovES2010,VaradarajanWGSC2010, EASFat, ChanGKS12,ElbTerrain} for the references establishing these bounds on $\beta$.}
		\begin{tabular}{|c|c|c|c|}
			\hline
			$\X$ & Geometric objects inducing $\S$ & $\beta$ & Our guarantee\\
			\hline
			\multirow{2}{*}{Points in $\real^2$} & Disks (via containment) & $O(1)$ & $O(\log r)$ \\
			\hhline{|~|---}
			& Fat triangles (containment) & $O(\log \log^* n)$ & $O(\log \log^* n + \log r)$ \\
			\hline
			\multirow{2}{*}{Points in $\real^3$} & Unit cubes (containment) & $O(1)$ & $O(\log r)$ \\
			\hhline{|~|---}
			& Halfspaces (containment) & $O(1)$ & $O(\log r)$ \\
			\hline
			Rectangles in $\real^3$ & Points (via stabbing) & $O(\log \log n)$ & $O(\log \log n + \log r)$ \\
			\hline
			Points on 1.5D terrain & Points on terrain (via visibility) & $O(1)$ & $O(\log r)$  \\
			\hline
		\end{tabular}
	\end{table}
	
	As a combinatorial application, consider the combinatorial version of the Partition Set Cover problem, where each element is contained in at most $f$ sets of $\R$. We note that the algorithm of \citet{bera2014approximation} can be extended in a straightforward way to obtain an $O(f \log r)$ approximation in this case. However, recall that the Set Cover \LP can be rounded to give an $f$ approximation. Therefore, using our result, we can get an $O(f + \log r)$ approximation for the Partition Set Cover problem, which is an improvement over the earlier result. 
	
	Finally, in \Cref{sec:fl-mcc}, we consider analogous generalizations of the (Metric Uncapacitated) Facility Location Problem, and the so-called Minimum Cost Covering Problem (\cite{CharikarP04}). Various \LP-based $O(1)$ approximations are known for these problems (\cite{JainVazirani2001,ByrkaFLLP,Li2013}, and \cite{CharikarP04} respectively), i.e., for these problems, $\beta = O(1)$. We show how to adapt the algorithm from \Cref{sec:LPRounding} to obtain $O(\log r)$ approximations for the generalizations of these problems with $r$ color classes.
	
	\paragraph{Organization.}
	In the following section, we first describe the strengthened \LP and then discuss the randomized rounding algorithm. In \Cref{subsec:coverage}, we prove the second property of $\Sigma_\ell$ as stated above. In \Cref{subsec:expectation} we prove a technical lemma that is required in the \Cref{subsec:coverage}. In \Cref{subsec:ellipsoid}, we show how to obtain a feasible solution to the strengthened \LP, despite exponentially many constraints in the \LP. In \Cref{sec:fl-mcc}, we give the $O(\log r)$ approximations for the generalizations of the Facility Location and the Minimum Cost Covering problems. Finally, in \Cref{sec:hardness}, we show how to extend the $\Omega(\log r)$ hardness result from \cite{bera2014approximation}, for all these problems.

	\section{Strenghtened \LP and Randomized Rounding} \label{sec:LPRounding}
	Recall that the input contains a set system $(X, \R)$, with weights on the sets, where $X = \{e_1, \ldots, e_n\}$ and $\R = \{S_1, \ldots, S_m\}$. We are also given $r$ non-empty subsets of $X$: $\C_1, \ldots, \C_r$, where each $\C_t$ is referred to as a \emph{color class}. Here, we consider a generalization of the Partition Set Cover problem, where the color classes cover $X$ but are no longer required to form a partition of $X$, i.e., an element $e_j$ can belong to multiple color classes. Each color class $\C_t$ has a coverage requirement $1 \le k_t \le |\C_t|$. The objective of the Partition Set Cover problem is to find a minimum-weight sub-collection $\R' \subseteq \R$, such that it meets the coverage requirement of each color class, i.e., for each color class $\C_t$, we have that $|(\bigcup \R') \cap \C_t| \ge k_t$. Let $OPT$ denote the cost of an optimal solution for this problem.
	
	Now, we describe the strengthened \LP. Imagine $\A \subseteq \R$ is a collection of sets that we have decided to add to our solution. The sets in $\A$ may cover some elements from each color class $\C_t$, potentially reducing the remaining coverage requirement. For a color class $\C_t$, define $\C_t(\A) \coloneqq (\bigcup \A) \cap \C_t$ to be the set of elements of color $\C_t$, covered by the sets in $\A$. Then, let $k_t(\A) \coloneqq \max\{0, k_t - |\C_t(\A)| \}$ be the residual coverage requirement of $\C_t$ with respect to the collection $\A$. Finally, for a set $S_i \not\in \A$, and a color class $\C_t$, define $\deg_t(S_i, \A) \coloneqq |S_i \cap (\C_t \setminus \C_t(\A))|$ to be the additional number of elements from $\C_t$ covered by $S_i$, provided that $\A$ is already a part of the solution. For any $\C_t$ and for any collection $\A \subseteq \R$, the following constraint is satisfied by any feasible integral solution:
	$$\sum_{S_i \not\in \A} x_i \cdot \min\{\deg_t(S_i, \A), k_t(\A)\} \ge k_t(\A)$$
	
	For a detailed explanation of why this constraint holds, we refer the reader to the discussion in \citet{bera2014approximation}. Adding such constraints for each $\A \subseteq \R$ and for each $\C_t$, gives the following strengthened \LP.
	
	\begin{mdframed}[backgroundcolor=gray!9] 
		(Strengthened \LP)
		\begin{alignat}{3}
		\text{minimize}   \displaystyle&\sum\limits_{S_i \in \S} w_{i}x_{i} & \nonumber \\
		\text{subject to\quad } 
		&\text{\Cref{constr:cover-ej,constr:cover-ci,constr:fractional-z,constr:fractional-x}, and } \nonumber
		\\\displaystyle&\sum_{S_i \not\in \A} x_i \cdot \min\{\deg_t(S_i, \A), k_t(\A)\} \ge k_t(\A), &\quad \forall \C_t \in \{\C_1, \ldots, \C_r\}  \text{ and } \forall \A \subseteq \R \label[constr]{constr:s-cover-ci-a}
		\end{alignat}
	\end{mdframed}
	
	
	Let $(x, z)$ be an \LP solution to the natural \LP\,---\, i.e., $(x, z)$ satisfies \Cref{constr:cover-ej,constr:cover-ci,constr:fractional-z,constr:fractional-x}. Let $H = \{e_j \in X \mid \sum_{ S_i \ni e_j} x_i \ge \frac{1}{6 \alpha} \}$ be the set of \emph{heavy} elements, where $\alpha > 1$ is some constant to be fixed later. Let $(\tilde{x})$ be a solution defined as $\tilde{x}_i = \min\{6\alpha \cdot x_i, 1\}$, for all $S_i \in \R$. By definition, for any heavy element $e_j \in H$, we have that $\sum_{S_i \ni e_j} \tilde{x}_i \ge 1$, so $(\tilde{x})$ is a feasible Set Cover \LP solution for the projected set system $(H, \R_{|H})$, with cost at most $6\alpha \cdot \sum_{S_i \in \R} x_i$. Let $\A'$ be the collection of sets returned by a $\beta$-approximate Set Cover \LP rounding algorithm. Starting from an empty solution, we add the sets from $\A'$ to the solution. Let $\A = \A' \cup \{S_i \in \R \mid x_i \ge \frac{1}{6\alpha} \}$. However, notice that if $x_i \ge \frac{1}{6\alpha}$ for some set $S_i$, then all elements in $S_i$ are heavy by definition, and hence are covered by $\A'$. Therefore, $\A$ and $\A'$ cover the same set of elements from $X$, and we may pretend that we have added the sets in $\A$ to our solution. 
	
	In \Cref{subsec:ellipsoid}, we discuss how to obtain a fractional \LP solution $(x, z)$ satisfying the following properties, in polynomial time. 
	\begin{enumerate}
		\item $(x, z)$ satisfies \Cref{constr:cover-ej,constr:cover-ci,constr:fractional-z,constr:fractional-x}.
		\item $\sum_{S_i \in \R} w_i x_i \le 2 \cdot OPT$
		\item $\sum_{S_i \not\in \A} x_i \cdot \min\{ \deg_{t}(S_i, \A), k_t(\A) \} \ge k_t(\A) \qquad \forall \C_t \in \{ \C_1, \ldots, \C_t \}$, where $\A$ is obtained from $(x, z)$ as described above.
	\end{enumerate}
	
	We assume that we have such a fractional \LP solution $(x, z)$. Now, for any uncovered element $e_j \in X \setminus \bigcup \A$, we have that $\sum_{ S_i \ni e_j} x_i < \frac{1}{6\alpha}$. Furthermore, for any set $S_i \in \R \setminus \A$, we have $x_i < \frac{1}{6\alpha}$. However, even after adding $\A$ to the solution, each color class $\C_t$ is left with some residual coverage $k_t(\A)$. In order to satisfy this residual coverage requirement, we use the following randomized rounding algorithm.
	
	Consider an iteration $\ell$ of the algorithm. In this iteration, for each set $S_i \in \R \setminus \A$, we independently add $S_i$ to the solution with probability $6x_i$. Let $\Sigma_\ell$ be the collection of sets added during this iteration in this manner. We perform $c \log r$ independent iterations of this algorithm for some constant $c$, and let $\Sigma = \bigcup_{\ell = 1}^{c \log r} \Sigma_\ell$. In \Cref{subsec:coverage}, we prove the following property about $\Sigma_\ell$.
	\begin{restatable}{lemma}{coverage}
		\label{lem:constant-prob}
		For any color class $\C_t$, the solution $\Sigma_\ell$ covers at least $k_t(\A)$ elements from $\C_t \setminus \C_t(\A)$ with probability at least a positive constant.
	\end{restatable}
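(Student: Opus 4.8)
The plan is to reduce the statement to the abstract coverage claim sketched in the introduction and then prove that claim by a second‑moment (Paley--Zygmund) argument. Fix the color class $\C_t$ and abbreviate $k \coloneqq k_t(\A)$ and $Y \coloneqq \C_t \setminus \C_t(\A)$, the elements of $\C_t$ still uncovered after $\A$ is added; we may assume $k \ge 1$. Every $e_j \in Y$ is uncovered, hence not heavy, so $\sum_{S_i \ni e_j} x_i < \tfrac{1}{6\alpha}$; fixing $\alpha \ge 2$, this makes each $\mu_i \coloneqq 6x_i$ (for $S_i \in \R \setminus \A$) a legitimate probability and, by a union bound over the sets through $e_j$, shows that $\Sigma_\ell$ covers $e_j$ with probability at most $\tfrac12$. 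The only other ingredient I would use is the third property of the fractional solution $(x,z)$ from \Cref{subsec:ellipsoid} — the knapsack‑cover inequality relative to $\A$ — which, after multiplying by $6$ and recalling $\deg_t(S_i,\A) = |S_i \cap Y|$, reads $\sum_{S_i \notin \A} \mu_i \cdot \min\{|S_i \cap Y|,\,k\} \ge 6k$. From here on I restrict every set to $Y$.

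First I would split the sets by size: call $S_i$ \emph{big} if $|S_i \cap Y| \ge k$ and \emph{small} otherwise, and let $B$ be the big sets, so the inequality above becomes $\sum_{S_i \in B} \mu_i k + \sum_{S_i \text{ small}} \mu_i |S_i \cap Y| \ge 6k$. If the first sum is at least $3k$, then $\Sigma_\ell$ contains at least one big set with probability $1 - \prod_{S_i \in B}(1-\mu_i) \ge 1 - e^{-3}$, and one big set already covers $\ge k$ elements of $Y$, so we are done in this case. Otherwise $\sum_{S_i \text{ small}} \mu_i |S_i \cap Y| > 3k$; I then discard the big sets (which only hurts coverage) and let $Z$ be the number of elements of $Y$ covered by the \emph{small} sets chosen in $\Sigma_\ell$, and it suffices to show $\Pr[Z \ge k] = \Omega(1)$.

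For that I would run Paley--Zygmund, needing (i) $\E[Z] > 2k$ with a little slack and (ii) $\Var(Z) = O(k\,\E[Z])$. Part (i) is easy: $1 - e^{-t} \ge \tfrac34 t$ for $t \in [0,\tfrac12]$ and $\sum_{S_i \text{ small},\, e \in S_i} \mu_i \le \tfrac12$, so summing coverage probabilities over $e \in Y$ gives $\E[Z] \ge \tfrac34 \sum_{S_i \text{ small}} \mu_i |S_i \cap Y| > \tfrac94 k$. Part (ii) is the heart of the argument and the step I expect to be the main obstacle — I would isolate it as a standalone technical lemma (presumably the one proved in \Cref{subsec:expectation}). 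Writing $N(e)$ for the small sets through $e$ and $\mathbf{1}_e$ for the indicator that the chosen small sets cover $e$, a direct computation gives, for $e \ne e'$, $\Cov(\mathbf{1}_e,\mathbf{1}_{e'}) = \big(\prod_{S_i \in N(e)}(1-\mu_i)\big)\big(\prod_{S_i \in N(e')\setminus N(e)}(1-\mu_i)\big)\big(1 - \prod_{S_i \in N(e)\cap N(e')}(1-\mu_i)\big) \le \Pr[\mathbf{1}_e = 0]\cdot\sum_{S_i \in N(e)\cap N(e')}\mu_i$. The crucial point is to \emph{not} bound the factor $\Pr[\mathbf{1}_e = 0]$ by $1$ yet: summing over all ordered pairs $e \ne e'$ in $Y$ and regrouping by the responsible set turns the total into $\sum_{S_i \text{ small}} \mu_i (|S_i\cap Y| - 1) \sum_{e \in S_i \cap Y} \Pr[\mathbf{1}_e = 0]$, and only now do I bound $\Pr[\mathbf{1}_e = 0] \le 1$ and use $|S_i\cap Y| - 1 < k$ for small sets, getting $\sum_{e\ne e'} \Cov(\mathbf{1}_e,\mathbf{1}_{e'}) \le k \sum_{S_i \text{ small}} \mu_i |S_i\cap Y| = k \sum_{e \in Y}\sum_{S_i \in N(e)}\mu_i$. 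Finally $\sum_{S_i \in N(e)} \mu_i \le -\ln\prod_{S_i \in N(e)}(1-\mu_i) = -\ln(1 - \Pr[\mathbf{1}_e = 1]) \le 2\Pr[\mathbf{1}_e = 1]$ because $\Pr[\mathbf{1}_e = 1] \le \tfrac12$, so the covariance sum is at most $2k\,\E[Z]$, and hence $\Var(Z) \le \E[Z] + 2k\,\E[Z] \le 3k\,\E[Z]$ since $k \ge 1$.

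With (i) and (ii) in hand, Paley--Zygmund gives $\Pr[Z \ge \tfrac12\E[Z]] \ge \tfrac14\,\E[Z]^2/(\Var(Z)+\E[Z]^2) \ge \tfrac14\,\E[Z]/(3k+\E[Z]) \ge \tfrac{3}{28}$, using $\E[Z] > \tfrac94 k$ in the last step; and since $\tfrac12\E[Z] > k$ this yields $\Pr[Z \ge k] \ge \tfrac{3}{28}$. Taking the worse of the two cases, $\Sigma_\ell$ satisfies the residual requirement of $\C_t$ with probability at least a positive constant. To restate the obstacle: a priori $Z$ is a sum of positively correlated indicators that could be badly spread out (a handful of sets each covering many elements), and it is precisely the two structural facts — every relevant set has fewer than $k$ elements of $Y$, and every element of $Y$ is covered with probability $\le \tfrac12$ — together with the trick of keeping the $\Pr[\mathbf{1}_e=0]$ factor around until after the regrouping, that force $\Var(Z) = O(k\,\E[Z])$ instead of the useless $O(k\,|Y|)$, which is what makes the Paley--Zygmund bound go through.
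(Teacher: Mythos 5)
Your proof is correct, but it takes a genuinely different route from the paper's. The paper never computes the variance of the coverage count itself: it introduces the auxiliary variable $Z = \sum_{S_i \notin \A} \delta_i \hat{x}_i$ with truncated degrees $\delta_i = \min\{\deg_t(S_i,\A),k\}/k$, shows $\Pr[Z<2] \le 3/8$ unconditionally by Chebyshev, and then --- this is where all the work goes --- shows $\Pr[Z<2 \mid \K] \ge 2/5$ \emph{conditioned on the failure event} $\K$, by bounding $\E[Z_j \mid \K \cap \L_j] \le 6/5$ through a stopping-time decomposition over ``histories'' of the sampling process (\Cref{subsec:expectation}); Bayes' rule then gives $\Pr[\K] \le (3/8)/(2/5) = 15/16$. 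You instead split the knapsack-cover mass between ``big'' sets (at least $k$ uncovered elements of the color class, where one sampled set already suffices) and ``small'' sets, and in the small case run Paley--Zygmund directly on the coverage count, controlling the covariance sum by exactly the two structural facts the paper isolates in its introduction (sets of size $<k$, per-element coverage probability $\le 1/2$). Your route is more elementary and self-contained --- it avoids conditioning on the failure event entirely, which is the delicate part of the paper's argument --- it yields a slightly better constant ($3/28$ vs.\ $1/16$), and it directly proves the abstract claim the paper only states informally; the paper's truncation of degrees inside $Z$ is what lets it avoid your big/small case split. One small quibble: your remark that it is ``crucial'' to retain the factor $\Pr[\mathbf{1}_e=0]$ until after regrouping is a red herring --- you bound it by $1$ afterwards anyway, and bounding it by $1$ immediately gives the identical estimate; the steps that actually carry the argument are $|S_i\cap Y|-1 < k$ for small sets and the conversion $\sum_{S_i\in N(e)}\mu_i \le 2\Pr[\mathbf{1}_e=1]$ (or, equivalently, the lower bound $\E[Z] \ge \tfrac34\sum_i \mu_i|S_i\cap Y|$). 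This does not affect correctness.
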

	
	This lemma implies the following result.
	
	\begin{theorem} \label{thm:main-theorem}
		Suppose there exists a polynomial time \LP rounding algorithm that rounds a given Set Cover \LP solution on any projection of $(X, \R)$, within a $\beta$ factor. Then, there exists a polynomial time randomized algorithm that returns a solution with approximation guarantee of $O(\beta + \log r)$ for the Partition Set Cover problem, with at least a constant probability.
	\end{theorem}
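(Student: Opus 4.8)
The plan is to assemble the final solution as $\A \cup \Sigma$, where $\A$ is the set cover for the heavy elements obtained by the $\beta$-approximate rounding of $(\tilde x)$, and $\Sigma = \bigcup_{\ell=1}^{c\log r}\Sigma_\ell$ is the output of the $c\log r$ independent randomized rounding iterations. Feasibility and cost will be argued separately. For feasibility: fix a color class $\C_t$. If $k_t(\A)=0$ the requirement is already met by $\A$. Otherwise, by \Cref{lem:constant-prob} a single iteration $\Sigma_\ell$ covers at least $k_t(\A)$ elements of $\C_t\setminus\C_t(\A)$ with probability at least some absolute constant $p>0$. Since the iterations are independent, the probability that \emph{no} iteration does so is at most $(1-p)^{c\log r}$, which is at most $1/(3r)$ once $c$ is a large enough constant (depending only on $p$). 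A union bound over the $r$ color classes then shows that with probability at least $2/3$, every $\C_t$ with $k_t(\A)>0$ has its residual requirement met by some single $\Sigma_\ell$, and hence by $\Sigma$; together with $\A$ covering all heavy elements, $\A\cup\Sigma$ is feasible with probability at least $2/3$.

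For the cost: the rounding of $(\tilde x)$ costs at most $\beta$ times the cost of $(\tilde x)$, which is at most $\beta\cdot 6\alpha\sum_i w_ix_i \le 12\alpha\beta\cdot OPT$ using property~2 of the fractional solution $(x,z)$ guaranteed by \Cref{subsec:ellipsoid}; the sets $S_i$ with $x_i\ge \frac1{6\alpha}$ that we fold into $\A$ add cost at most $6\alpha\sum_i w_ix_i \le 12\alpha\cdot OPT$, so $w(\A) = O(\beta)\cdot OPT$. Each iteration picks $S_i\in\R\setminus\A$ independently with probability $6x_i$, so $\E[w(\Sigma_\ell)] = 6\sum_{S_i\notin\A} w_ix_i \le 12\cdot OPT$; summing over the $c\log r$ iterations gives $\E[w(\Sigma)] = O(\log r)\cdot OPT$. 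By Markov's inequality, $w(\Sigma) \le 12 c\log r\cdot OPT / \delta$ with probability at least $1-\delta$ for a suitable small constant $\delta$ (say $\delta = 1/6$), so $w(\A\cup\Sigma) = O(\beta+\log r)\cdot OPT$ with probability at least $5/6$.

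Combining the two events by a union bound, with probability at least $2/3 - 1/6 = 1/2$ the algorithm returns a feasible solution of cost $O(\beta+\log r)\cdot OPT$, which proves the theorem (the success probability can be boosted by repetition, since feasibility and cost are both checkable). The only genuinely nontrivial ingredient is \Cref{lem:constant-prob}, whose proof is deferred to \Cref{subsec:coverage}; everything else here is the standard "independent repetition to amplify a constant success probability, plus Markov on the cost" template. The one point that needs a little care in the write-up is that $\A$ as a collection of original sets from $\R$ (rather than their projections onto $H$) indeed covers all heavy elements — this is exactly the observation already made in the text that adding $\{S_i : x_i \ge \frac1{6\alpha}\}$ changes nothing since such $S_i$ have all their elements heavy and hence already covered by $\A'$ — and that the $\min\{\deg_t(S_i,\A),k_t(\A)\}$ truncation in the randomized-rounding analysis is what makes \Cref{lem:constant-prob} applicable to $\Sigma_\ell$ restricted to $\C_t\setminus\C_t(\A)$.
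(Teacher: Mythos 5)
Your proposal is correct and follows essentially the same route as the paper's proof: feasibility via \Cref{lem:constant-prob} plus independence of the $c\log r$ iterations and a union bound over color classes, cost via the $\beta$-rounding bound for the heavy part and Markov's inequality on $\E[w(\Sigma)] = O(\log r)\cdot OPT$, then a final union bound over the two failure events. The only cosmetic difference is that you charge for the sets $\{S_i : x_i \ge \frac{1}{6\alpha}\}$ folded into $\A$, whereas the paper outputs $\A'\cup\Sigma$ and avoids that charge; both are fine.
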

	\begin{proof}
		First, we argue about the running time. As described earlier, the discussion of how to obtain the required \LP solution $(x, z)$ in polynomial time is deferred to \Cref{subsec:ellipsoid}. It is easy to see that the rest of the steps take polynomial time, and hence the overall time taken by this algorithm is polynomial.
		
		Now, we argue about the feasibility of this solution. The sets in $\A'$ form a cover for the heavy elements by assumption. Furthermore, from \Cref{lem:constant-prob}, in any iteration $\ell$, the sets in $\Sigma_\ell$ satisfy the remaining coverage requirement of any color class $\C_t$, with at least a constant probability. It follows from union bound that the probability that there exists a color class with unmet coverage requirement after $c \log r$ independent iterations is at most $1/(2r)$, for appropriately chosen constant $c$. Therefore, the solution $\A' \cup \Sigma$ is feasible with probability at least $1-1/(2r).$
		
		As argued earlier, $w(\A') \le 6\alpha \beta \cdot \sum_{S_i \in \R} w_i x_i = O(\beta) \cdot OPT$, since $\alpha$ is a constant. It is also easy to see that for any iteration $\ell$ of the randomized rounding, $\E[w(\Sigma_\ell)] \le 6 \cdot \sum_{S_i \in \R} w_i x_i = O(1) \cdot OPT$. Therefore, $\E[w(\Sigma)] \le c' \log r \cdot OPT$, for some constant $c'$. Therefore, $\Pr[w(\Sigma) \le 3c' \log r \cdot OPT] \ge \frac{2}{3}$, using Markov's inequality. The theorem follows from an application of union bound over the events concerning the feasibility and the approximation guarantee of the solution.
	\end{proof}

	\subsection{Analyzing Coverage of \texorpdfstring{$\Sigma_\ell$}{Σ\_l}} \label{subsec:coverage}
	As stated earlier, we show that for any color class $\C_t$, the probability that $\Sigma_\ell$ covers at least $k_t(\A)$ elements is at least a positive constant. Let $\A$ be the collection of sets as defined earlier. Recall that for any color class $\C_t$, the solution $(x, z)$ satisfies:
	$$\sum_{S_i \not\in \A} x_i \cdot \min\{\deg_t(S_i, \A), k_t(\A)\} \ge k_t(\A).$$
	
	Henceforth, fix a color class $\C_t$. To simplify notation, let us use the following shorthands: $\C \coloneqq \C_t \setminus \C_t(\A)$ is the set of uncovered elements from $\C_t$. $k \coloneqq k_t(\A)$ is the residual coverage requirement. For any set $S_i \in \R$, we restrict it to its projection on $\C_t \setminus \C_t(\A)$. Similarly, for a collection of sets $\R' \subseteq \R \setminus \A$, we restrict $\bigcup \R'$ to mean the set of ``uncovered elements'' from this color class $\C_t$, i.e., $\bigcup \R' = \bigcup_{S_i \in \R'} S_i$ (where each $S_i \in \R'$ is the projection of the original set, as in the previous sentence). Finally, let $\delta_i \coloneqq \frac{\min\{\deg_t(S_i, \A), k\}}{k}$. Notice that using this notation, the preceding constraint is equivalent to $$\sum_{S_i \not\in \A} \delta_i x_i \ge 1.$$
	
	For a set $S_i$, let $\hat{x}_i$ be an indicator random variable that denotes whether or not $S_i$ was added to $\Sigma_\ell$. It is easy to see that $\E[\hat{x}_i] = \Pr[S_i \text{ is added}] = 6x_i$. For an element $e_j \in \C$, let $Z_j \coloneqq \sum_{S_i \ni e_j} \hat{x}_i$ be a random variable that denotes the number of sets containing $e_j$ that are added to $\Sigma_\ell$. Notice that $e_j$ is covered by $\Sigma_\ell$ iff $Z_j \ge 1$. Note that, $\E[Z_j] = \sum_{ S_i \ni e_j} \E[\hat{x}_i] = \sum_{S_i \ni e_j} 6 x_i < 6 \cdot \frac{1}{6\alpha} = \frac{1}{\alpha}$.
	
	Let $Z \coloneqq \sum_{S_i \not\in \A} \delta_i \cdot \hat{x}_i $ be a random variable. Notice that 
	\begin{equation}
		Z = \sum_{S_i \not\in \A} \delta_i \cdot \hat{x}_i \le \frac{1}{k} \sum_{S_i \not\in \A} \hat{x}_i \cdot \deg_{t}(S_i, \A) = \frac{1}{k} \sum_{S_i \not\in \A} \sum_{e_j \in S_i} \hat{x}_i = \frac{1}{k} \sum_{e_j \in \C} Z_j\ . \label[ineq]{ineq:Z}
	\end{equation}
	
	Here, the second equality follows from the fact that $\deg_t(S_i, \A)$ is exactly the number of elements in $S_i$ (that are not covered by $\A$). 
	
	Now following \citet{bera2014approximation}, we show the following fact about $Z$.
	
	\begin{claim} \label{lem:Z-var}
		$\Pr[Z < 2] \le \frac{3}{8}$.
	\end{claim}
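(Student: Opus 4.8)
The plan is to prove this by a second-moment (Chebyshev) argument, which is exactly what the constant $\tfrac{3}{8}$ is calibrated to. First I would record the mean of $Z$. Since the indicators $\hat{x}_i$ are independent with $\E[\hat{x}_i]=6x_i$, we have $\mu \coloneqq \E[Z]=\sum_{S_i\notin\A}6\delta_i x_i = 6\sum_{S_i\notin\A}\delta_i x_i \ge 6$, where the inequality is precisely the strengthened \LP constraint $\sum_{S_i\notin\A}\delta_i x_i\ge 1$ that $(x,z)$ satisfies for the fixed color class $\C_t$. This is the one place where the choice of rounding probability $6x_i$ (and hence the heavy/light threshold $\tfrac{1}{6\alpha}$) gets used.

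Next I would bound the variance. Because the $\hat{x}_i$ are independent, $\Var(Z)=\sum_{S_i\notin\A}\delta_i^2\,\Var(\hat{x}_i)=\sum_{S_i\notin\A}\delta_i^2\cdot 6x_i(1-6x_i)$, and using $0\le\delta_i\le 1$ together with $1-6x_i\le 1$ this is at most $\sum_{S_i\notin\A}\delta_i\cdot 6x_i=\mu$. So $\Var(Z)\le\E[Z]$, a clean Poisson-type bound. Now I would apply Chebyshev: since $\mu\ge 6>2$, the event $\{Z<2\}$ is contained in $\{|Z-\mu|\ge\mu-2\}$, so $\Pr[Z<2]\le \Var(Z)/(\mu-2)^2\le \mu/(\mu-2)^2$. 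It then remains to check that $g(\mu)\coloneqq \mu/(\mu-2)^2\le\tfrac{3}{8}$ for all $\mu\ge 6$: at $\mu=6$ we get $6/16=\tfrac{3}{8}$, and $g'(\mu)=-(\mu+2)/(\mu-2)^3<0$ for $\mu>2$, so $g$ is decreasing on $[6,\infty)$. This yields $\Pr[Z<2]\le\tfrac{3}{8}$, as claimed.

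As an alternative one can instead invoke a multiplicative Chernoff lower-tail bound, viewing $Z$ as a sum of the independent $[0,1]$-valued variables $\delta_i\hat{x}_i$ with mean $\mu\ge 6$: taking the worst case $\mu=6$ and deviation parameter so that $(1-\delta)\mu=2$ gives $\Pr[Z<2]\le e^{-4/3}<\tfrac{3}{8}$, and larger $\mu$ only improves this. The Chebyshev route is cleaner and matches the stated constant exactly, so I would present that one.

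There is no genuine obstacle here; the only subtlety to flag is that the variance bound really does need $\delta_i\le 1$ (it is false for the unnormalized degrees $\deg_t(S_i,\A)$), and that the final numeric inequality needs $\mu\ge 6$ rather than merely "$\mu$ large" — which is exactly why $Z$ is defined with the normalization by $k$ and why the constant $6$ appears both in the rounding probability and in the definition of the heavy set $H$.
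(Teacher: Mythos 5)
Your proof is correct and follows essentially the same route as the paper's: compute $\E[Z] = 6\sum_{S_i\notin\A}\delta_i x_i \ge 6$ from the knapsack-cover constraint, bound $\Var[Z]\le\E[Z]$ using independence and $0\le\delta_i\le 1$, and apply Chebyshev (the paper uses the deviation threshold $\tfrac{2}{3}\E[Z]$ rather than your $\E[Z]-2$, but both coincide at $\E[Z]=6$ and yield exactly $\tfrac{3}{8}$). No gaps.
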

	\begin{proof}
		First, notice that $$\E[Z] = \sum_{S_i \not\in \A} \delta_i \cdot \E[\hat{x_i}] = 6 \cdot \sum_{S_i \not\in \A} \delta_i x_i \ge 6.$$
		
		Now, consider 
		\begin{align*}
			\Var[Z] = \sum_{S_i \not\in \A} \delta_i^2 \cdot \Var[\hat{x}_i] = \sum_{S_i \not\in \A} \delta_i^2 \cdot 6x_i (1-6x_i) \le 6 \sum_{S_i \not\in \A} \delta_i x_i\ . \tag{$\because$ $0 \le \delta_i \le 1$.}
		\end{align*}
		
		Using Chebyshev's inequality, 
		\begin{align*}
			\Pr[Z < 2] \le \Pr\left[\big|Z - \E[Z]\big| \ge \frac{2\E[Z]}{3} \right] \le \frac{9}{4} \cdot \frac{\Var[Z]}{\E[Z]^2} \le \frac{9}{4} \cdot \frac{6 \sum_{S_i \not\in \A} \delta_i x_i}{(6 \sum_{S_i \not\in \A} \delta_i x_i)^2}  \le \frac{3}{8}\ . \tag{$\because \sum_{S_i \not\in \A} \delta_i x_i \ge 1$.}
		\end{align*}
	\end{proof}
	
	For convenience, let us use the following notation for some events of interest:
	 
	\begin{align*}
		\K &\equiv \Sigma_\ell \text{ covers at most $k-1$ elements from $\C$}
		\\\M &\equiv Z < 2
	\end{align*}
	Recall that the objective is to show that $\Pr[\K]$ is upper bounded by a constant less than $1$. To this end, we first analyze $\Pr[\M|\K]$. 
	
	\begin{claim} \label{lem:p-given-q}
		$\Pr[\M|\K] \ge \frac{2}{5}$
	\end{claim}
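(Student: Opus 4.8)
The plan is to bound $\Pr[\overline{\M}\mid\K]=\Pr[Z\ge 2\mid\K]$ by $3/5$, by tracing $Z$ back to the number of elements that $\Sigma_\ell$ actually covers. Let $N$ denote the number of elements of $\C$ covered by $\Sigma_\ell$, i.e.\ $N=|\{e_j\in\C: Z_j\ge 1\}|$; by the definition of $\K$ we have $N\le k-1$ on $\K$. Split the total coverage count as $\sum_{e_j\in\C}Z_j = N + D$, where $D:=\sum_{e_j\in\C}\max\{Z_j-1,0\}$ measures the total ``over‑coverage''. By \Cref{ineq:Z}, $Z\le \tfrac1k\sum_{e_j\in\C}Z_j=\tfrac{N+D}{k}$, so on the event $\K$ we get $Z\le \tfrac{k-1+D}{k}$, and therefore $Z\ge 2$ forces $D\ge k+1$. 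Hence $\Pr[Z\ge 2\mid\K]\le \Pr[D\ge k+1\mid\K]\le \tfrac{\E[D\mid\K]}{k+1}$ by Markov's inequality, and it suffices to prove $\E[D\mid\K]<k/\alpha$: once $\alpha\ge 5/3$ this is at most $\tfrac35(k+1)$ (the value of $\alpha$ is left open above; any fixed $\alpha\ge 2$ works throughout).

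The bound $\E[D\mid\K]<k/\alpha$ is the technical heart of the argument (and, I expect, the ``technical lemma'' deferred to \Cref{subsec:expectation}); proving it is the main obstacle, since $\K$ is correlated with every $Z_j$, so conditional probabilities cannot simply be replaced by unconditional ones. The key is the following per‑element inequality, valid for each fixed $e_j\in\C$:
\begin{align*}
	\E\big[\max\{Z_j-1,0\}\cdot\mathbf 1_{\K}\big]
	&\ \le\ \Big(6\sum_{S_i\ni e_j}x_i\Big)\cdot \E\big[\mathbf 1_{\{Z_j\ge 1\}}\cdot\mathbf 1_{\K}\big]\\
	&\ <\ \frac{1}{\alpha}\,\E\big[\mathbf 1_{\{Z_j\ge 1\}}\cdot\mathbf 1_{\K}\big],
\end{align*}
where the last step uses that $e_j$ is light (it is not covered by $\A$), so $6\sum_{S_i\ni e_j}x_i=\E[Z_j]<1/\alpha$, as already observed above. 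To prove the middle inequality, condition on the coin flips of all sets that do \emph{not} contain $e_j$. These flips determine a set of already‑covered elements of $\C$; if that set has size $\ge k$ then $\mathbf 1_{\K}=0$ and both sides vanish, so assume otherwise. Over the remaining randomness (the independent flips of the sets $S_i\ni e_j$, each chosen with probability $6x_i$), the event $\K$ is now a \emph{decreasing} event: choosing more sets through $e_j$ only covers more elements. Fix an order on the sets through $e_j$ and further condition on which one is the \emph{first} chosen, if any. Given this, $\mathbf 1_{\{Z_j\ge 1\}}=1$, and $\max\{Z_j-1,0\}$ equals the number of \emph{later} sets through $e_j$ that are chosen --- an increasing function of the remaining flips. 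Conditioning on the decreasing event $\K$ can only lower the expectation of an increasing function, so this conditional expectation is at most its unconditional value $6\sum_{S_i\ni e_j}x_i$; averaging over the first chosen set and then over the outside flips yields the per‑element inequality. Summing it over $e_j\in\C$ gives $\E[D\cdot\mathbf 1_{\K}]<\tfrac1\alpha\,\E[N\cdot\mathbf 1_{\K}]\le \tfrac{k-1}{\alpha}\,\Pr[\K]$, since $N\le k-1$ on $\K$; dividing by $\Pr[\K]$ gives $\E[D\mid\K]<(k-1)/\alpha<k/\alpha$.

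Combining the two paragraphs, for any fixed constant $\alpha\ge 5/3$ we obtain $\Pr[Z\ge 2\mid\K]\le \tfrac{\E[D\mid\K]}{k+1}<\tfrac{k-1}{\alpha(k+1)}<\tfrac1\alpha\le\tfrac35$, hence $\Pr[\M\mid\K]=1-\Pr[Z\ge 2\mid\K]\ge\tfrac25$, as claimed. All of the difficulty is concentrated in the per‑element conditional inequality; the rest is the decomposition $\sum_{e_j\in\C}Z_j=N+D$, the bound $N\le k-1$ on $\K$, and one application of Markov's inequality. The trick that makes the conditioning manageable is to peel off the ``first chosen set'' through $e_j$ and to note that, conditioned on the other sets, $\K$ is a down‑closed event in the flips of the sets through $e_j$, so conditioning on it cannot increase the count of further such sets.
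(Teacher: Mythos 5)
Your proof is correct, and it shares the overall skeleton of the paper's argument: both rest on the relation $Z \le \frac{1}{k}\sum_{e_j \in \C} Z_j$ from \Cref{ineq:Z}, the fact that on $\K$ at most $k-1$ of the $Z_j$ are nonzero, a per-element bound on the conditional multiplicity of a covered element obtained by peeling off the \emph{first} chosen set through $e_j$, and a final application of Markov's inequality. There are two genuine points of divergence. First, the paper applies Markov to $Z$ itself at threshold $2$ after showing $\E[Z \mid \K] < 6/5$, whereas you apply it to the over-coverage $D = \sum_j \max\{Z_j - 1, 0\}$ at threshold $k+1$; these are essentially interchangeable, since your per-element inequality is exactly the statement $\E[Z_j \mid \K \cap \L_j] \le 1 + \E[Z_j]$, while the paper's deferred lemma in \Cref{subsec:expectation} proves $\E[Z_j \mid \K \cap \L_j] \le 1 + \frac{1}{\alpha - 1}$. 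Second\,---\,and this is the real difference\,---\,your proof of the per-element bound handles the conditioning on $\K$ via the Harris/FKG correlation inequality: after fixing the flips of the sets avoiding $e_j$ and the identity of the first chosen set through $e_j$, the event $\K$ is decreasing and the count of later chosen sets through $e_j$ is increasing in the remaining independent flips, so conditioning on $\K$ can only lower that count's expectation, giving the bound $\E[Z_j]$ directly. The paper instead uses the cruder estimate $\E[\bar{Z_j} \mid \K'] \le \E[\bar{Z_j}] / \Pr[\Sigma_\ell \setminus \Sigma'_\ell = \emptyset]$ together with the Weierstrass product inequality, which costs a factor and yields $\frac{1}{\alpha-1}$ rather than $\frac{1}{\alpha}$. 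What each buys: the paper's step is more elementary (no correlation inequality is invoked), while yours is slightly sharper and lets the argument go through for any constant $\alpha \ge 5/3$ rather than the paper's choice $\alpha = 6$; since $\alpha$ also governs the heavy/light threshold and hence the $O(\alpha\beta)$ cost of covering the heavy elements, the looser requirement on $\alpha$ is a (minor) quantitative improvement. One presentational nit: when you divide $\E[D \cdot \mathbf{1}_\K] \le \frac{k-1}{\alpha}\Pr[\K]$ by $\Pr[\K]$ you should note that $\Pr[\K] > 0$ may be assumed, as otherwise the claim is vacuous\,---\,the paper has the same implicit assumption.
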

	\begin{proof}
		For an element $e_j \in \C$, define an event $$\L_j \equiv Z_j \ge 1 \text{ (i.e., $e_j$ is covered)}.$$
		First, consider the following conditional expectation:
		\begin{align}
			\E[Z|\K] &\le \frac{1}{k}\sum_{e_j \in \C} \E[Z_j | \K] \tag{From \ref{ineq:Z}}
			\\&= \frac{1}{k} \sum_{e_j \in \C} \Pr[\bar{\L_j} | \K] \cdot \E[Z_j | \K \cap \bar{\L_j}] + \Pr[\L_j | \K] \cdot \E[Z_j | \K \cap \L_j] \nonumber
			\\&= \frac{1}{k} \sum_{e_j \in \C} \Pr[\L_j | \K] \cdot \E[Z_j | \K \cap \L_j ]\ . \tag{$\because \E[Z_j | \K \cap \bar{\L_j}] = 0.$}
		\end{align}
		
		In \Cref{subsec:expectation}, we show that the conditional expectation $E[Z_j | \K \cap \L_j]$ is upper bounded by $\frac{6}{5}$. Then, it follows that,
		\begin{align*}
		\E[Z|\K] &\le \frac{6}{5k} \sum_{e_j \in \C} \Pr[Z_j \ge 1 | \K] 
		\\&= \frac{6}{5k} \sum_{\R' \subseteq \R \setminus \A} \Pr[\Sigma_\ell = \R' | \K] \cdot \left|\bigcup \R'\right| \tag{Where we sum over collections $\R' \subseteq \R \setminus \A$ s.t. $|\bigcup \R'| \le k-1$}
		\\&\le \frac{6(k-1)}{5k} \sum_{\R' \subseteq \R \setminus \A} \Pr[\Sigma_\ell = \R' | \K]
		\\&< \frac{6}{5}\ .
		\end{align*}
		
		Now, using Markov's inequality, we have that 
		$$\Pr[\M|\K] = \Pr[Z < 2 | \K] \ge 1 - \frac{\E[Z|\K]}{2} \ge 1- \frac{3}{5} = \frac{2}{5}\ .$$
	\end{proof}
	
	We conclude with proving the main result of the section, which follows from \Cref{lem:Z-var} and \Cref{lem:p-given-q}.
	\coverage*
	\begin{proof}
		Consider 
		$$\Pr[\K] = \frac{\Pr[\K|\M] \cdot \Pr[\M]}{\Pr[\M|\K]} \le \frac{\Pr[\M]}{\Pr[\M|\K]} \le \frac{3/8}{2/5} = \frac{15}{16}\ .$$
		Therefore, $\Pr[\Sigma_\ell \text{ covers at least $k_t(\A)$ elements from }\C_t \setminus \C_t(\A) ] = 1 - \Pr[\K] \ge \frac{1}{16}\ .$
	\end{proof}

	\subsection{Analyzing \texorpdfstring{$\E[Z_j | \K \cap \L_j]$}{E[Z\_j | K ∩ L\_j]}} \label{subsec:expectation}
	In this section, we show that for any $e_j \in \C$, the conditional expectation $\E[Z_j | \K \cap \L_j]$ is bounded by $\frac{6}{5}$. Recall that $\L_j$ denotes the event $Z_j \ge 1$ (equivalently, $e_j$ is covered by $\Sigma_\ell$), and that $\K$ denotes the event that $\Sigma_\ell$ covers at most $k-1$ elements. For notational convenience, we shorten $\L_j$ to $\L$.
	
	Partition the sets $\R \setminus \A$ into disjoint collections $\R_1, \R_2$, where $\R_1$ consists of sets that do not contain $e_j$, and $\R_2$ consists of sets that contain $e_j$. Fix an arbitrary ordering $\sigma$ of sets in $\R \setminus \A$, where the sets in $\R_1$ appear before the sets in $\R_2$. We view the algorithm for choosing $\Sigma_\ell$, as considering the sets in $\R \setminus \A$ according to this ordering $\sigma$, and making a random decision of whether to add each set in $\Sigma_\ell$. Let $\Sigma'_\ell$ be the random collection of sets added according to the ordering $\sigma$, until the first set containing $e_j$, say $S_i$, is added to $\Sigma_\ell$. Note that if we condition on the event $\L$, such an $S_i$ must exist. 
	
	Let \hi denote the event that (i) $S_i$ is the first set containing $e_j$ that is added by the algorithm, and (ii) $\Sigma'_\ell$ is the collection added by the algorithm, just after $S_i$ was added. Note that \hi contains the history of the choices made by the algorithm, until the point just after $S_i$ is considered. For a history \hi, $k-1-|\bigcup \Sigma'_\ell|$ is the maximum number of additional elements that can still be covered, without violating the condition $\K$ (which says that $\Sigma_\ell$ covers at most $k-1$ elements). We say that a history \hi is relevant, if $k-1-|\bigcup \Sigma'_\ell| \ge 0$. Thus,
	\begin{equation} \label{eqn:exp-1}
		\E[Z_j \mid \K \cap \L] = \sum_{\hi} \Pr[\hi \mid \K \cap \L] \cdot \E[Z_j \mid \K \cap \L \cap \hi],
	\end{equation} 
	where we only sum over the relevant histories.
	
	Now, once $S_i$ has been added to the solution, $e_j$ is covered, thereby satisfying the condition $\L$. That is, the event \hi implies the event $\L$. It follows that,
	\begin{equation} \label{eqn:exp-2}
		\E[Z_j \mid \K \cap \L \cap \hi] = \E[Z_j \mid \K \cap \hi].
	\end{equation}
	Let $\K'$ denote the event that $\Sigma_\ell \setminus \Sigma'_\ell$ covers at most $p \coloneqq k-1-|\bigcup \Sigma'_\ell|$ elements. Then,
	\begin{equation} \label{eqn:exp-3}
		\E[Z_j \mid \K \cap \hi] = \E[Z_j \mid \K' \cap \hi].
	\end{equation}
	Now, let $\bar{Z_j}$ be the sum of the indicator random variables $\hat{x}_{i'}$, over the the sets $S_{i'} \in \R_2$, that occur after $S_i$ in the ordering $\sigma$. Clearly, $\E[\bar{Z_j}] \le \E[Z_j]$. We also have,
	\begin{equation} \label{eqn:exp-4}
		\E[Z_j \mid \K' \cap \hi] = 1 + \E[\bar{Z_j} \mid \K' \cap \hi].
	\end{equation}
	This is because, $Z_j$ denotes the number of sets containing $e_j$ that are added to $\Sigma_\ell$, including $S_i$; whereas $\bar{Z_j}$ does not count $S_i$. Now, $\bar{Z_j}$ and $\K'$ are concerned with the sets after $S_i$ according to $\sigma$, whereas \hi concerns the history upto $S_i$. Therefore,
	\begin{align} 
		\E[\bar{Z_j} \mid \K' \cap \hi] &= \E[\bar{Z_j} \mid \K'] \nonumber
		\\&= \E[\bar{Z_j} |\Sigma_\ell \setminus \Sigma'_\ell \text{ covers at most } p \text{ additional elements}] \nonumber
		\\&\le \frac{\E[\bar{Z_j}]}{\Pr[\Sigma_\ell \setminus \Sigma'_\ell \text{ covers at most }p \text{ additional elements}]} \nonumber
		\\&\le \frac{\E[\bar{Z_j]}}{\Pr[\Sigma_\ell \setminus \Sigma'_\ell = \emptyset]} \tag{$\because$ ``$\Sigma_\ell \setminus \Sigma'_\ell$ covers at most $p$ additional elements'' $\supseteq$ ``$\Sigma_\ell \setminus \Sigma'_\ell = \emptyset$'' } \nonumber
		\\&\le \frac{1}{\alpha - 1}\ . \label{eqn:exp-5}
	\end{align}
	This follows from (i) $\E[\bar{Z_j}] \le \E[Z_j] \le \frac{1}{\alpha}$ as argued earlier, and (ii) $\Pr[\Sigma_\ell \setminus \Sigma'_\ell = \emptyset] \ge \prod_{S_{i'} \ni e_j }(1 - 6x_{i'}) \ge 1 - \sum_{S_{i'} \ni e_j} 6x_{i'} \ge \frac{\alpha - 1}{\alpha}$, where we use Weierstrass product inequality in the second step. 
	
	Now, combining \Cref{eqn:exp-2,eqn:exp-3,eqn:exp-4,eqn:exp-5}, we conclude that
	$$\E[Z_j \mid \K \cap \L \cap \hi] \le 1 + \frac{1}{\alpha-1} = \frac{\alpha}{\alpha -1}.$$
	Plugging this into \Cref{eqn:exp-1}, we get that, 
	$$\E[Z_j \mid \L\cap \K] \le \sum_{\hi} \frac{\alpha}{\alpha - 1} \cdot  \Pr[\hi \mid \K\cap \L] = \frac{\alpha}{\alpha - 1} \sum_{\hi} \Pr[\hi \mid \K\cap \L] = \frac{\alpha}{\alpha-1}.$$

	Choosing $\alpha = 6$, it follows that $E[Z_j | \L\cap \K] \le \frac{6}{5}$, as claimed.
	
	\subsection{Solving the \LP} \label{subsec:ellipsoid}
	Recall that the strengthened \LP has exponentially many constraints, and hence we cannot use a standard \LP algorithm directly. We guess the cost of the integral optimal solution up to a factor of $2$ (this can be done by binary search), say $\Delta$. Then, we convert the \LP into a feasibility \LP by removing the objective function, and adding a constraint $\sum_{S_i \in \R} w_i x_i \le \Delta$. We then use Ellipsoid algorithm to find a feasible solution to this \LP and let $(x, z)$ be a candidate solution returned by the Ellipsoid algorithm. If it does not satisfy the preceding constraint, we report it as a violated constraint. We also check \Cref{constr:cover-ej,constr:cover-ci,constr:fractional-z,constr:fractional-x} (the number of these constraints is polynomial in the input size), and report if any of these constraints is violated. 
	
	Otherwise, let $H$ be the set of heavy elements with respect to $(x, z)$ (as defined earlier), and let $\A = \A' \cup \{S_i \in \R \mid x_i \ge \frac{1}{6\alpha}\}$ be the collection of sets as defined in \Cref{sec:LPRounding} (where $\A'$ is the Set Cover solution for the heavy elements $H$ returned by the rounding algorithm). Then, we check if the following constraint is satisfied with respect to this $\A$, for all color classes $\C_t$:
	$$\sum_{S_i \not\in \A} x_i \min\{ \deg_{t}(S_i, \A), k_t(\A) \} \ge k_t(\A)\ .$$
	 
	If this constraint is not satisfied for some color class $\C_t$, we report it as a violated constraint. Otherwise we stop the Ellipsoid algorithm and proceed with the randomized rounding algorithm with the current \LP solution $(x, z)$, as described in \Cref{sec:LPRounding}. Note that $(x, z)$ may not satisfy \Cref{constr:s-cover-ci-a} with respect to all collections $\A$. However, our randomized rounding algorithm requires that it is satisfied with respect to the specific collection $\A$ as defined earlier. Therefore, we do not need to check the feasibility of $(x, z)$ with respect to exponentially many constraints.
	
	\section{Facility Location and Minimum Cost Covering with Multiple Outliers}
	\label{sec:fl-mcc}
	
	We consider generalizations of the Facility Location and Minimum Cost Covering problems. These generalizations are analogous to the Partition Set Cover problem considered in the previous section. That is, the set of ``clients'' (which are the objects to be covered) is partitioned into $r$ color classes, and each color class has a coverage requirement. We note that for the (standard) Facility Location and Minimum Cost Covering problems, \LP-based $O(1)$ approximation algorithms are known. In the following, we first state the generalizations formally, and then show how the Randomized Rounding framework from the previous section can be adapted to obtain $O(\log r)$ approximations for these problems. These guarantees are asymptotically tight, in light of the hardness results given in \Cref{sec:hardness}.
	
	 \subsection{Facility Location with Multiple Outliers}
	
	In the Facility Location with Multiple Outliers problem, we are given a set of Facilities $F$, a set of Clients $C$, belonging to a metric space $(F \cup C, d)$. Each facility $i \in F$ has a non-negative opening cost $f_i$. We are given $r$ non-empty subsets of clients (or ``color classes'') $\C_1, \ldots, \C_r$, that partition the set of clients. Each color class $\C_t$ has a connection requirement $1 \le k_t \le |\C_t|$. The objective of the Facility Location with Multiple Outliers problem is to find a solution $(F^*, C^*)$, such that $\sum_{i \in F'} f_i + \sum_{j \in C'} d(j, F')$ is minimized over all feasible solutions $(F', C')$. A solution $(F', C')$ is feasible if (i) $|F'| \ge 1$ and (ii) For all color classes $\C_t$, $|\C_t \cap C'| \ge k_t$. Note that this is a generalization of the Robust Facility Location problem, first considered by \citet{Charikar2001FLwO}. 
	
	A natural \LP formulation of this problem is as follows. 
	
	\begin{mdframed}[backgroundcolor=gray!9] 
		(Natural \LP for Facility Location with Multiple Outliers)
		\begin{alignat}{3}
		\text{minimize}   &\ \sum\limits_{i \in F} f_{i}x_{i} + \sum_{i \in F, j \in C} y_{ij} \cdot d(i, j) & \nonumber \\
		\text{subject to} \displaystyle&\sum\limits_{i \in F}   y_{ij} \geq z_j,  \quad &  \forall j \in C \label[constr]{constr:fl-cover-j}\\
		\displaystyle&\sum_{j \in \C_r}z_j \ge k_t, & \forall \C_t \in \{\C_1, \ldots, \C_r\} \label[constr]{constr:fl-cover-ci}\\
		\displaystyle&0 \le  y_{ij} \le x_i \le 1, & \forall i \in F, \forall j \in C \label[constr]{constr:fl-atmost-ci}\\
		\displaystyle &z_j \in [0, 1], &  \forall j \in C \label[constr]{constr:fl-fractional-z}
		\end{alignat}
	\end{mdframed}
	
	We note that the integrality gap example from \Cref{subsec:nat-lp} can be easily converted to show a similar gap for the Facility Location with Multiple Outliers problem. Therefore, we strengthen the \LP in a manner similar to the previous section.
	
	First, we convert the \LP to a feasibility \LP by guessing the optimal cost up to a factor of $2$, say $\Delta$, and by adding a constraint $\sum\limits_{i \in F} f_{i}x_{i} + \sum_{i \in F, j \in C} y_{ij} \cdot d(i, j) \le \Delta$. Similar to \Cref{subsec:ellipsoid}, we use the Ellipsoid algorithm to find a feasible \LP solution that satisfies this constraint, as well \Crefrange{constr:fl-cover-j}{constr:fl-fractional-z}. Let $H = \{j \in C \mid \sum_{i \in F} y_{ij} \ge \frac{1}{6\alpha} \}$ be the set of \emph{heavy} clients. For any $i \in F$, let $\tilde{x}_i \coloneqq \min \{1, 6\alpha \cdot x_i \}$, and for any $i \in F, j \in H$, let $y_{ij} \coloneqq \min\{1, 6\alpha \cdot y_{ij}\}$. It is easy to see that $(\tilde x, \tilde y)$ is a feasible Facility Location (without outliers) solution for the instance induced by the heavy clients, and its cost is at most $6\alpha \Delta$. We use an \LP-based algorithm (such as \cite{ByrkaFLLP}) with a constant approximation guarantee to round this solution to an integral solution $(F_H, H)$, where $F_H \subseteq F$. 
	
	Let $L = C \setminus H$ be the set of \emph{light} clients. Note that for any light client $j \in L$, $z_j \le \sum_{i \in F} y_{ij} < \frac{1}{6\alpha}$. Also, for a color class $\C_t$, let $\C_t(H) \coloneqq \C_t \setminus H$ denote the uncovered (light) elements from $\C_t$, and let  $k_t(H) \coloneqq k_t - |\C_t \cap H|$ denote its residual coverage requirement. Wlog, we assume that $k_t(H)$ is positive, otherwise we can ignore the color class $\C_t$ from consideration in the remaining part. Now, we check whether the following constraint holds for all color classes $\C_t$:
	
	\begin{equation}
	\sum_{i \in F} \min\bigg\{x_i \cdot k_t(H), \sum_{j \in \C_t(H)} y_{ij}  \bigg\} \ge k_t(H) \label[constr]{eqn:fl-constraint}
	\end{equation}
	
	 First, note that this can be easily formulated as an \LP constraint by introducing auxiliary variables. If this constraint is not satisfied for some color class $\C_t$, we report it as a violated constraint. Consider the integral \LP solution $(x', y', z')$ corresponding to a feasible integral solution $(F', C')$. We argue that $(x', y', z')$ satisfies this constraint. Note that at most $|\C_t \cap H|$ clients are connected from the set $\C_t \cap H$. Therefore, by feasibility of the solution, at least $k_t(H)$ clients must be connected from $\C_t(H)$. For a facility $i \in F'$, the quantity $\sum_{j \in \C_t(H)} y'_{ij}$ denotes the number of clients connected to $i$. However, even if the number of clients connected to $i$ is more than $k_t(H)$, only $k_t(H)$ of them count towards satisfying the residual connection requirement. Therefore, $(x', y', z')$ satisfies this constraint for all color classes, and hence it is a valid constraint.
	
	Now, suppose we have an \LP solution $(x, y, z)$ that satisfies \Crefrange{constr:fl-cover-j}{eqn:fl-constraint}, and has cost at most $\Delta$. By ``splitting'' the facilities into multiple co-located copies if necessary, we ensure the following two conditions hold:
	\begin{enumerate}
		\item For any facility $i \in F$, $x_i < \frac{1}{6\alpha}$.
		\item For any client $j \in L$ and any facility $i \in F$, $y_{ij} > 0\ \implies\ y_{ij} = x_i$.
	\end{enumerate}
	This has to be done in a careful manner, since we also want to maintain \Cref{eqn:fl-constraint} after the facilities have been split. This procedure results in a feasible \LP solution of the same cost. Henceforth, we treat all co-located copies of a facility as distinct facilities for the sake of the analysis. We now show that the rounding for the light clients can be reduced to the Randomized Rounding algorithm from the previous section.
	
	For any facility $i \in F$, let $S_i \coloneqq \{j \in L \mid x_{i} = y_{ij} \}$ denote the set of light clients that are fractionally connected to $i$. The cost of opening facility $i$ and connecting all $j \in S_i$ to $i$ is equal to $w_i \coloneqq f_i + \sum_{j \in S_i} d(i, j)$. Consider an instance $(L, \R)$ of the Partition Set Cover problem, where $\R = \{S_i \mid i \in F\}$ with weights $w_i$, and residual coverage requirement $k_t(H)$ for each color class $\C_t(H)$, and consider the corresponding \LP solution $(x, z)$. The following properties are satisfied by the \LP solution. 
	\begin{enumerate}
		\item All the elements are light, and all the sets $S_i \in \R$ have $x_i < \frac{1}{6\alpha}$.
		\item The costs of the two \LP solutions are equal: $$\sum_{S_i \in \R} w_i x_i = \sum_{i \in F} x_i \cdot \bigg(f_i + \sum_{j \in S_i} d(i, j)\bigg) = \sum_{i \in F} f_i x_i + \sum_{i \in F,\ j \in L} y_{ij} \cdot d(i, j).$$
		\item \Cref{eqn:fl-constraint} is equivalent to:
		$$\sum_{S_i \in \R} x_i \cdot \min\left\{ k_t(H), |S_i \cap \C_t| \right\} \ge k_t(H) \quad \forall \C_t.$$
	\end{enumerate}
	
	Therefore, we can use the Randomized Rounding algorithm from the previous section to obtain a solution $\Sigma = \bigcup_{\ell = 1}^{O(\log r)} \Sigma_\ell$. It has cost at most $O(\log r) \cdot \Delta$, and for each color class $\C_t(H)$, it covers at least $k_t(H)$ clients, with at least a constant probability. To obtain a solution for the Facility Location with Multiple Outliers problem, we open any facility $i \in F$, if its corresponding set $S_i$ is selected in $\Sigma$. Furthermore, we connect $k_t(H)$ clients from $\C_t(H)$ to the set of opened facilities. Note that the cost of this solution is upper bounded by $w(\Sigma) \le O(\log r) \cdot \Delta$. Combining this with the solution $(F_H, H)$ for the heavy clients with cost at most $O(1) \cdot \Delta$, we obtain our overall solution for the given instance. It is easy to see that this is an $O(\log r)$ approximation.

	\subsection{Minimum Cost Covering with Multiple Outliers}
	
	Here, we are given a set of Facilities $F$, a set of Clients $C$, belonging to a metric space $(F \cup C, d)$. Each facility $i \in F$ has a non-negative opening cost $f_i$. We are given $r$ subsets of clients (or ``color classes'') $\C_1, \ldots, \C_r$, where any client $j \in C$ belongs to at least one color class. Each color class $\C_t$ has a coverage requirement $1 \le k_t \le |\C_t|$. A ball centered at a facility $i \in F$ of radius $r \ge 0$ is the set $B(i, r) \coloneqq \{j \in C \mid d(i, j) \le r \}$. The goal is to select a set of balls $\B = \{B_i = B(i, r_i) \mid i \in F' \subseteq F \}$ centered at some subset of facilities $F' \subseteq F$, such that (i) The set of balls $\B$ satisfies the coverage requirement of each color class and (ii) the sum $\sum_{i \in F'} (f_i + r_i^\gamma)$ is minimized. Here, $\gamma \ge 1$ is a constant, and is a parameter of the problem.
	
	Note that even though the radius of a ball centered at $i \in F$ is allowed to be any non-negative real number, it can be restricted to the following set of ``relevant'' radii: $R_i \coloneqq \{d(i, j)\mid j \in C\}$. Now, define a set system $(C, \R)$. Here, $C$ is the set of clients, and $\R = \{ B(i, r) \mid i \in F, r \in R_i \}$, with weight of the set corresponding to a ball $B(i, r)$ being defined as $f_i + r^\gamma$. Now, we use the algorithm from the previous section for this set system. Let $H$ be the set of heavy clients (or elements) as defined in \Cref{sec:LPRounding}. We use the Primal-Dual algorithm of \citet{CharikarP04} \footnote{\citet{CharikarP04} consider the special case of $\gamma = 1$, however their algorithm easily generalizes to arbitrary $\gamma$.} with an approximation guarantee of $\beta = 3^\gamma$ (which is a constant) to obtain a cover for the heavy clients. For the remaining light clients, we use the Randomized Rounding algorithm as is. Note that this reduction from the Minimum Cost Covering with Multiple Outliers Problem to the Partition Set Cover Problem is not exact, since the solution thus obtained may select sets corresponding to concentric balls in the original instance. However, from each set of concentric balls, we can choose the largest radius ball. This pruning process does not affect the coverage, and can only decrease the cost of the solution. Therefore, it is easy to see that the resulting solution is an $O(\log r)$ approximation.
	
	\section{\texorpdfstring{$\Omega(\log r)$}{Ω(log r)} Hardness Results} \label{sec:hardness}
	
	In this section, we show that it is \NP-hard to obtain approximation guarantees better than $O(\log r)$ for the Partition Set Cover for several geometric set systems, as well as the problems considered in \Cref{sec:fl-mcc}. The reductions are from (unweighted) Set Cover, and are straightforward extensions of a similar hardness result shown in \cite{bera2014approximation}.
	
	\subsection*{Geometric Set Systems}
	
	Suppose we are given an instance of Set Cover $(X, \R)$, where $X = \{e_1, \ldots, e_n\}$. For each set $S_i \in \R$, add a unit interval $I_i$ in $\real$, such that all intervals are disjoint. Add a point $p_{ij}$ (of color class $\C_j$) inside an interval $I_i$, corresponding to an element $e_j \in X$, and a set $S_i \ni e_j$. Thus, there are $n$ disjoint color classes, partitioning the set of points. The coverage requirement of each color class is $1$. It is easy to see that a feasible solution to the Partition Set Cover instance corresponds to a feasible solution to the original Set Cover instance, of the same cost. The $\Omega(\log r)$ hardness follows from the $\Omega(\log n)$ hardness for Set Cover (\cite{DS2014}), and $r = n$ is the number of color classes. Therefore, $\Omega(\log r)$ hardness follows, for the Partition Set Cover problem where the sets are unit intervals in $\real$. This is easily generalized to any other type of geometric objects, as all that is needed is the disjointness of the geometric objects.
	
	\subsection*{Facility Location and Minimum Cost Covering}
	
	First, consider the Facility Location with Multiple Outliers problem. Given an instance $(X, \R)$ of the unweighted Set Cover problem, we add facilities $i \in F$, corresponding to sets $S_i \in R$, uniformly separated on the real line, such that the distance between the facilities is at least $|X| \cdot |\R|$. The opening cost of each facility is $1$. Similar to the reduction above, we add a client $c_{ij}$ co-located with facility $i \in F$, corresponding to an element $e_j \in X$, and a set  $S_i \ni e_j$. The coverage requirement of each color class is set to $1$. It is easy to see the one-to-one correspondence between optimal solutions to both of these problems. 
	
	Now, we tweak the above instance for obtaining the same result for the Minimum Cost Covering with Multiple Outliers problem on a line, even when all the opening costs are $0$ (otherwise, we can use the reduction from the paragraph above as is). For each facility $i \in F$, add the clients $c_{ij}$ (corresponding to the elements in $S_i$) at a distance of $1$ from $i$ (instead of being co-located, as in the previous reduction). The coverage requirement of each color class is $1$, as before. Note that the facility $i$ and the clients $\{c_{ij} \mid e_j \in S_i \}$ form a ``cluster'', and the inter-cluster distance is large enough to ensure that an optimal solution to the resulting instance consists of disjoint clusters, which then exactly corresponds to an optimal solution to the Set Cover problem.
	
	\subsection*{Acknowledgment}
	We thank Sariel Har-Peled and Timothy M. Chan for preliminary discussions on this problem.
	\bibliography{references}

\begin{thebibliography}{25}
\providecommand{\natexlab}[1]{#1}
\providecommand{\url}[1]{\texttt{#1}}
\expandafter\ifx\csname urlstyle\endcsname\relax
  \providecommand{\doi}[1]{doi: #1}\else
  \providecommand{\doi}{doi: \begingroup \urlstyle{rm}\Url}\fi

\bibitem[Aronov et~al.(2010)Aronov, Ezra, and Sharir]{AronovES2010}
Boris Aronov, Esther Ezra, and Micha Sharir.
\newblock {Small-Size $\epsilon$-Nets for Axis-Parallel Rectangles and Boxes}.
\newblock \emph{{SIAM} J. Comput.}, 39\penalty0 (7):\penalty0 3248--3282, 2010.

\bibitem[Bera et~al.(2014)Bera, Gupta, Kumar, and Roy]{bera2014approximation}
Suman~K Bera, Shalmoli Gupta, Amit Kumar, and Sambuddha Roy.
\newblock {Approximation algorithms for the partition vertex cover problem}.
\newblock \emph{Theoretical Computer Science}, 555:\penalty0 2--8, 2014.

\bibitem[Bshouty and Burroughs(1998)]{BshoutyL1998}
Nader~H. Bshouty and Lynn Burroughs.
\newblock {Massaging a Linear Programming Solution to Give a 2-Approximation
  for a Generalization of the Vertex Cover Problem}.
\newblock In \emph{{{STACS} 98, 15th Annual Symposium on Theoretical Aspects of
  Computer Science, Paris, France, February 25-27, 1998, Proceedings}}, pages
  298--308, 1998.

\bibitem[Byrka et~al.(2010)Byrka, Ghodsi, and Srinivasan]{ByrkaFLLP}
Jaroslaw Byrka, Mohammadreza Ghodsi, and Aravind Srinivasan.
\newblock Lp-rounding algorithms for facility-location problems.
\newblock \emph{CoRR}, abs/1007.3611, 2010.

\bibitem[Carr et~al.(2000)Carr, Fleischer, Leung, and
  Phillips]{carrStrengthening}
Robert~D. Carr, Lisa~K. Fleischer, Vitus~J. Leung, and Cynthia~A. Phillips.
\newblock {Strengthening Integrality Gaps for Capacitated Network Design and
  Covering Problems}.
\newblock In \emph{{Proceedings of the Eleventh Annual ACM-SIAM Symposium on
  Discrete Algorithms}}, {SODA '00}, pages 106--115, Philadelphia, PA, USA,
  2000. Society for Industrial and Applied Mathematics.
\newblock ISBN 0-89871-453-2.
\newblock URL \url{http://dl.acm.org/citation.cfm?id=338219.338241}.

\bibitem[Chan et~al.(2012)Chan, Grant, K{\"o}nemann, and Sharpe]{ChanGKS12}
Timothy~M. Chan, Elyot Grant, Jochen K{\"o}nemann, and Malcolm Sharpe.
\newblock {Weighted capacitated, priority, and geometric set cover via improved
  quasi-uniform sampling}.
\newblock In \emph{{Proceedings of the Twenty-Third Annual {ACM-SIAM} Symposium
  on Discrete Algorithms, {SODA} 2012, Kyoto, Japan, January 17-19, 2012}},
  pages 1576--1585, 2012.

\bibitem[Charikar and Panigrahy(2004)]{CharikarP04}
Moses Charikar and Rina Panigrahy.
\newblock Clustering to minimize the sum of cluster diameters.
\newblock \emph{J. Comput. Syst. Sci.}, 68\penalty0 (2):\penalty0 417--441,
  2004.
\newblock \doi{10.1016/j.jcss.2003.07.014}.
\newblock URL \url{https://doi.org/10.1016/j.jcss.2003.07.014}.

\bibitem[Charikar et~al.(2001)Charikar, Khuller, Mount, and
  Narasimhan]{Charikar2001FLwO}
Moses Charikar, Samir Khuller, David~M Mount, and Giri Narasimhan.
\newblock Algorithms for facility location problems with outliers.
\newblock In \emph{Proceedings of the twelfth annual ACM-SIAM symposium on
  Discrete algorithms}, pages 642--651. Society for Industrial and Applied
  Mathematics, 2001.

\bibitem[Clarkson and Varadarajan(2007)]{ClarksonV2007}
Kenneth~L. Clarkson and Kasturi Varadarajan.
\newblock {Improved Approximation Algorithms for Geometric Set Cover}.
\newblock \emph{{Discrete \& Computational Geometry}}, 37\penalty0
  (1):\penalty0 43--58, 2007.

\bibitem[Dinur and Steurer(2014)]{DS2014}
Irit Dinur and David Steurer.
\newblock {Analytical Approach to Parallel Repetition}.
\newblock In \emph{{Proceedings of the Forty-sixth Annual ACM Symposium on
  Theory of Computing}}, {STOC '14}, pages 624--633, New York, NY, USA, 2014.
  ACM.
\newblock ISBN 978-1-4503-2710-7.

\bibitem[Elbassioni et~al.(2011)Elbassioni, Krohn, Matijevic, Mestre, and
  Severdija]{ElbTerrain}
Khaled~M. Elbassioni, Erik Krohn, Domagoj Matijevic, Juli{\'a}n Mestre, and
  Domagoj Severdija.
\newblock {Improved Approximations for Guarding 1.5-Dimensional Terrains}.
\newblock \emph{Algorithmica}, 60\penalty0 (2):\penalty0 451--463, 2011.

\bibitem[Ezra et~al.(2011)Ezra, Aronov, and Sharir]{EASFat}
Esther Ezra, Boris Aronov, and Micha Sharir.
\newblock {Improved Bound for the Union of Fat Triangles}.
\newblock In \emph{{Proceedings of the Twenty-second Annual ACM-SIAM Symposium
  on Discrete Algorithms}}, {SODA '11}, pages 1778--1785, Philadelphia, PA,
  USA, 2011. Society for Industrial and Applied Mathematics.

\bibitem[Feige(1998)]{Feige1998}
Uriel Feige.
\newblock {A Threshold of Ln N for Approximating Set Cover}.
\newblock \emph{J. ACM}, 45\penalty0 (4):\penalty0 634--652, July 1998.

\bibitem[Fujito(2001)]{Fujito04}
Toshihiro Fujito.
\newblock {On combinatorial approximation of covering 0-1 integer programs and
  partial set cover}.
\newblock \emph{Journal of Combinatorial Optimization}, 8\penalty0
  (4):\penalty0 439--452, 2001.

\bibitem[Gandhi et~al.(2004)Gandhi, Khuller, and Srinivasan]{Gandhi2004}
Rajiv Gandhi, Samir Khuller, and Aravind Srinivasan.
\newblock {Approximation algorithms for partial covering problems}.
\newblock \emph{J. Algorithms}, 53\penalty0 (1):\penalty0 55--84, 2004.

\bibitem[Gupta et~al.(2009)Gupta, Krishnaswamy, Kumar, and
  Segev]{gupta2009scheduling}
Anupam Gupta, Ravishankar Krishnaswamy, Amit Kumar, and Danny Segev.
\newblock {Scheduling with Outliers}.
\newblock In \emph{{Proceedings of the 12th International Workshop and 13th
  International Workshop on Approximation, Randomization, and Combinatorial
  Optimization. Algorithms and Techniques}}, {APPROX '09 / RANDOM '09}, pages
  149--162, Berlin, Heidelberg, 2009. Springer-Verlag.
\newblock ISBN 978-3-642-03684-2.
\newblock \doi{10.1007/978-3-642-03685-9_12}.
\newblock URL \url{http://dx.doi.org/10.1007/978-3-642-03685-9_12}.

\bibitem[Har-Peled and Jones()]{Har2018few}
Sariel Har-Peled and Mitchell Jones.
\newblock {Few Cuts Meet Many Point Sets}.
\newblock \emph{arXiv preprint arXiv:1808.03260}.

\bibitem[Hochbaum(1998)]{Hochbaum98}
Dorit~S. Hochbaum.
\newblock {The t-Vertex Cover Problem: Extending the Half Integrality Framework
  with Budget Constraints}.
\newblock In \emph{{Proceedings of the International Workshop on Approximation
  Algorithms for Combinatorial Optimization}}, {APPROX '98}, pages 111--122,
  London, UK, UK, 1998. Springer-Verlag.
\newblock ISBN 3-540-64736-8.

\bibitem[Inamdar and Varadarajan(2018)]{Inamdar2018partial}
Tanmay Inamdar and Kasturi~R. Varadarajan.
\newblock {On Partial Covering For Geometric Set Systems}.
\newblock In \emph{{34th International Symposium on Computational Geometry,
  SoCG 2018, June 11-14, 2018, Budapest, Hungary}}, pages 47:1--47:14, 2018.
\newblock \doi{10.4230/LIPIcs.SoCG.2018.47}.
\newblock URL \url{https://doi.org/10.4230/LIPIcs.SoCG.2018.47}.

\bibitem[Jain and Vazirani(2001)]{JainVazirani2001}
Kamal Jain and Vijay~V Vazirani.
\newblock Approximation algorithms for metric facility location and k-median
  problems using the primal-dual schema and lagrangian relaxation.
\newblock \emph{Journal of the ACM (JACM)}, 48\penalty0 (2):\penalty0 274--296,
  2001.

\bibitem[Kearns(1990)]{Kearns1990}
Michael~J. Kearns.
\newblock \emph{{Computational Complexity of Machine Learning}}.
\newblock MIT Press, Cambridge, MA, USA, 1990.
\newblock ISBN 0262111527.

\bibitem[Li(2013)]{Li2013}
Shi Li.
\newblock A 1.488 approximation algorithm for the uncapacitated facility
  location problem.
\newblock \emph{Information and Computation}, 222:\penalty0 45--58, 2013.

\bibitem[Slav\'{\i}k(1997)]{Slavik1997}
Petr Slav\'{\i}k.
\newblock {Improved Performance of the Greedy Algorithm for Partial Cover}.
\newblock \emph{Inf. Process. Lett.}, 64\penalty0 (5):\penalty0 251--254,
  December 1997.

\bibitem[Varadarajan(2010)]{VaradarajanWGSC2010}
Kasturi~R. Varadarajan.
\newblock {Weighted geometric set cover via quasi-uniform sampling}.
\newblock In \emph{{Proceedings of the 42nd {ACM} Symposium on Theory of
  Computing, {STOC} 2010, Cambridge, Massachusetts, USA, 5-8 June 2010}}, pages
  641--648, 2010.

\bibitem[Vazirani(2001)]{VaziraniBook}
Vijay~V. Vazirani.
\newblock \emph{{Approximation Algorithms}}.
\newblock Springer-Verlag New York, Inc., New York, NY, USA, 2001.
\newblock ISBN 3-540-65367-8.

\end{thebibliography}
\end{document}